\documentclass{amsart}
\usepackage{amsmath}
\usepackage{amssymb}
\usepackage{amsthm}
\usepackage{tikz}
\usepackage[utf8]{inputenc}
\usepackage[english]{babel}
\usepackage{microtype}


\newtheorem{theorem}{Theorem}

\theoremstyle{plain}

\newtheorem{lemma}{Lemma}

\newcommand{\0}{{\tt 0}}
\newcommand{\1}{{\tt 1}}

\newcommand{\abs}[1]{\left|#1\right|}

\newcommand{\pref}{\leq_p}
\newcommand{\suff}{\leq_s}

\newcommand{\analyze}[1]{\noindent {\small \bf #1}}

\newcommand{\less}{\vartriangleleft}
\newcommand{\LESS}{\blacktriangleleft}

\begin{document}

\title{Fully bordered words}
\begin{abstract}
    We characterize binary words that have exactly two unbordered conjugates and show that they can be expressed as a product of two palindromes.
\end{abstract}
\author{\v St\v ep\' an Holub}
\address{Department of Algebra, Charles University, Sokolovsk\'a 83, 175 86 Praha, Czech Republic}
\thanks{Supported by the Czech Science Foundation grant number 13-01832S}
\email{holub@karlin.mff.cuni.cz}
\author{Mike M\"uller}
\address{Institut f{\"u}r Informatik, Christian-Albrechts-Universit{\"a}t zu Kiel, D-24098 Kiel, Germany}
\email{mail@mikemueller.name}
\subjclass{68R15}
\keywords{palindromes, unbordered words, Lyndon words}

\maketitle

It is well known that each primitive word has an unbordered conjugate. The typical example of such a conjugate is the Lyndon conjugate, that is, the minimal one in a lexicographic ordering. Using different lexicographic orders, we can obtain several unbordered conjugates. In particular, any primitive binary word has at least two unbordered conjugates. This leads to a natural question about the structure of words that do not have any other unbordered conjugate apart from these obligatory two. We shall call them \emph{fully bordered}.  In this paper we give an inductive characterization of fully bordered words, which shows that they have some kind of fractal structure. One consequence of this characterization is that each fully bordered word is a product
of two palindromes. This result can be also interpreted in terms of palindromic length, introduced in \cite{pallength}, saying that fully bordered words have the palindromic length two. The property was conjectured in 2012 by Luca Zamboni (personal communication), and it was the motivation for our research.

The other extremal case, that is, binary words with as many unbordered conjugates as possible, was studied in \cite{bordercorr}. The present work is also related to research on critical points of a word, which is a stronger concept than unbordered conjugate: the conjugate in the critical point is always unbordered, but not vice versa. Some results related to our words can be found in \cite{density}, where authors investigate words with many and few critical points. In particular, they study words with a unique critical point. A related topic is also Duval's problem and the Ehrenfeucht-Silberger problem, solved in \cite{DuvalDirk, Duval, EhrSil} using similar techniques as those we employ in this paper.

\section{Preliminaries}
We first review basic concepts and facts we use in this paper. Let $\Sigma = \{\0, \1\}$ be a binary alphabet and $\Sigma^*$ the free monoid generated by $\Sigma$ using the concatenation operation with the empty word as the unit.
We refer to elements of $\Sigma^*$ as \emph{binary words}.
The concatenation of two words $u, v$ is denoted by $u \cdot v$, but we will omit the operator most of the times and simply write $uv$. 
We say that a word $w = w_1w_2\cdots w_n$, where $w_i \in \Sigma$ for all $1 \leq i \leq n$, is of \emph{length} $n$,
and we denote this as $|w| = n$.

The two possible lexicographic orders on $\Sigma^*$ are $\less$, defined by $\0 \less \1$, and $\LESS$,
where $\1 \LESS \0$.
The \emph{reversal} of a word $w = w_1w_2\cdots w_n$ of length $n$ is the word $w^R = w_n \cdots w_2w_1$.
A word $w$ is called a \emph{palindrome} if $w = w^R$.

Two words $w$ and $w'$ are \emph{conjugates} if there exist words $u$ and $v$, such that $w = uv$ and $w' = vu$.
The set of all conjugates of $w$ is denoted by $[w]$.
For a word $w = uvz$, we say that $u$ is a \emph{prefix} of $w$, $v$ is a \emph{factor} of $w$, and $z$ is a \emph{suffix} of $w$.
We denote these prefix- and suffix-relations by $u \pref w$ and $z \suff w$, respectively.
Furthermore, we write $vz = u^{-1}w$ and $uv = wz^{-1}$ in this case. A prefix (suffix) of $w$ is \emph{proper} if $u\neq w$ ($z\neq w)$; then we write $u<_p w$ ($z<_s w$).
A \emph{cyclic occurrence} of $u$ in $w$ is a number $0 \leq m <|w|$ such that $u$ is a prefix of $w_2w_1$ where $w_1w_2=w$ and $m=|w_1|$.

A word is \emph{primitive} if $w = t^k$ implies $k=1$ and thus $w = t$. It is a basic fact that if $uv=vu$ then $u=t^i$ and $v=t^j$ for some $t$ and some $i,j\geq 0$ (see, for example, section 2.3 of \cite{jeff}). In particular, no primitive word is a nontrivial conjugate of itself.

A word $w$ is \emph{bordered}, if there is a nonempty  word $u\neq w$ that is both a suffix and a prefix of $w$. Any such $u$ is called a \emph{border} of $w$. Note that a border may be of length greater than $|w|/2$. However, it is easy to see that any bordered word has a border $u$ such that $w=uvu$ for some (possibly empty) word $v$. Naturally, a word is \emph{unbordered}, if it is not bordered.

If  $w = w_1w_2\cdots w_n$ and $w_{i+p} = w_i$ holds for some $p\geq 1$ and all $1 \leq i \leq n-p$, then $p$ is \emph{a period} of $w$. The smallest period of $w$ is called \emph{the period} of $w$. The prefix $u$ of $w$ such that $|u|$ is the period of $w$ is called the \emph{periodic root} of $w$. Note that the periodic root is always primitive. Note also that a word $w$ is unbordered if and only if its smallest period is $|w|$. In particular, an unbordered word $w$ cannot be a factor of a word with a period smaller than $|w|$.

A \emph{Lyndon word} is a primitive word $w$ that is the lexicographically smallest element (with respect to some order) of $[w]$.
To make it clear which order is considered when we compare words or talk about minimal words, we will speak about $\less$-Lyndon words or $\LESS$-Lyndon words.

 Let $z$, $|z|<|w|$, be a prefix of a word $w$ and let $r$ be the shortest border of the word $w_z:=z^{-1}wz$ if $w_z$ is bordered, and let $r=w_z$ otherwise.
We say that $r$ is the \emph{local periodic root of $w$ at the point $m=|z|$} and $|r|$ is \emph{the local period} at the point $m$. 
We note that our concept of the local period is consistent with the terminology used in the literature if $w$ is considered as a cyclic word. Accordingly, the local periodic root $r$ is the shortest nonempty word for which $rr$ is centered at the given point $|z|$ in a cyclically understood word $w$. 

We stress that the local periodic root is always unbordered (being defined as the shortest border). This implies that the local period of a primitive word $w$ at any point is either less than $|w|/2$ (equality would yield imprimitivity) or $|w|$. In the latter case we say that the local period is \emph{trivial}.

We now prove several basic and useful properties of Lyndon words. For convenience, we formulate them for the order $\less$. 
\begin{lemma}\label{lem:self}
Any Lyndon word is unbordered.
\end{lemma}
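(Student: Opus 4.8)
The plan is to argue by contradiction, exploiting the fact that a border of a Lyndon word is simultaneously a prefix and a suffix, which will force a nontrivial conjugate to be no larger than the word itself. Suppose $w$ is a $\less$-Lyndon word and that $w$ has a border $u$, so $u <_p w$ and $u <_s w$, with $u$ nonempty. Write $w = uv$; since $u$ is also a suffix, $u \suff v$... more carefully, since $|u| < |w|$ we may write $w = u z$ where $z$ is a nonempty suffix and also $w = z' u$. The conjugate obtained by cutting after the prefix $u$ is $w' = zu$ where $w = uz$. I would then compare $w$ and $w' = zu$ lexicographically.

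First I would pin down that since $u$ is a proper prefix of $w$ and $u$ is a suffix of $w$, the word $w$ itself begins with $u$, and the conjugate $w' = zu$ (with $w=uz$) satisfies: $w'$ ends with $u$, but more to the point $w = uz$ and we want to see $w' \less w$ or derive a contradiction with primitivity. The key computation: since $w$ is Lyndon, it is the strictly smallest conjugate, so $w \less zu$ strictly (it cannot be equal because $w$ is primitive, hence not a nontrivial conjugate of itself — this uses the basic fact quoted in the preliminaries). So $w = uz \less zu$. Now I would use that $u \pref w = uz$ trivially, and also $u \suff w$ means $w = z'' u$ for some $z''$ with $|z''| = |z|$, hence $u \pref zu$ as well... wait, that needs care; the cleaner route is: because $u \suff w$, the conjugate $zu$ has $u$ as a suffix, and I want a prefix comparison. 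Let me instead take $w = uz$ and note $u \pref w$; since also $u \suff w$, write $w = yu$. Then the conjugate $uy$ (rotating $u$ from the end to the front) — but $uy$ is just... hmm. The genuinely clean step: from $u \pref w$ and $u \suff w$ with $|u| \le |w|$, one shows $u$ is also a prefix of the conjugate $w'' $ obtained by moving the suffix $u$ to the front, and then comparing $w$ with that conjugate using that they share the prefix $u$ reduces the comparison to shorter words, and iterating (or invoking minimality directly) yields $w \LESS$-equal or a periodicity contradiction.

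So the streamlined argument I would write is: let $u$ be the shortest border, $w = uz = z'u$ with $|z| = |z'|$ nonempty. Consider the conjugate $c = uz'$ — no; consider $c = z'u$ which equals $w$, unhelpful. Right: consider $c = uuz'$ is too long. The actual classical one-liner: since $u \pref w$, minimality of $w$ among conjugates gives $w \less$ every rotation; take the rotation $r$ that puts the suffix $u$ at the front, $r = u \cdot (z'$ where $w = z'u)$, i.e. $r = uz'$ with $|uz'| = |w|$; then $w = uz$ and $r = uz'$ share the prefix $u$, and $w \less r$ together with... I realize the suffix $u$ of $w$ and prefix $u$ of $w$ overlap in the rotation, forcing $z'$ to be a prefix of $z$ or vice versa, which combined with $w \less r$ forces $z = z'$ and then $w$ has period $|z|$, contradicting primitivity unless $z$ empty. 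I would present exactly this: the main obstacle is bookkeeping the overlap of the two occurrences of $u$ cleanly (handling $|u| > |w|/2$), and I expect to resolve it by invoking the observation already in the preliminaries that a bordered word has a border $u$ with $w = uvu$, which makes $z = vu$ and $z' = uv$ and the period argument immediate: $w = uvu$ is Lyndon, the conjugate $vuu = vu \cdot u$ versus $uuv$... ultimately $w \less uv u$'s rotations forces $u$ to compare with $v$-parts and yields imprimitivity or $u$ empty, the desired contradiction.
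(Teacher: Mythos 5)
You land on the right decomposition --- invoking the preliminaries' observation that a bordered word can be written $w = uvu$ --- and this is precisely where the paper's proof starts. But you never close the argument. Your final sentence (``ultimately $w \less uvu$'s rotations forces $u$ to compare with $v$-parts and yields imprimitivity or $u$ empty'') is a gesture, not a deduction. The missing chain is short and you should state it: $uuv$ is a conjugate of $w = uvu$ (rotate by $|uv|$), so Lyndon minimality together with primitivity gives $uvu \less uuv$ strictly; cancelling the common prefix $u$ yields $vu \less uv$; since $vu \neq uv$ they differ at some position $\leq |uv|$, so appending $u$ on the right preserves the strict inequality, giving $vuu \less uvu = w$; but $vuu$ is also a conjugate of $w$, contradicting minimality. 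That two-step cancellation-then-reappend trick is the entire content of the paper's one-line proof, and it is exactly what your write-up omits.

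Separately, one of your intermediate attempts contains an outright false step worth flagging: from $w = uz = z'u$ with $|z| = |z'|$ and $w \less uz'$, you claim this ``forces $z'$ to be a prefix of $z$ or vice versa, which \dots forces $z = z'$.'' There is no such forcing: $z$ is a suffix of $w$ and $z'$ is a prefix of equal length, and in general they are unrelated (e.g. $w = \0\1\0$, $u = \0$ gives $z = \1\0 \neq \0\1 = z'$). Had you pursued that route the argument would have collapsed; switching to $w = uvu$ was the correct repair, it simply needed to be carried through.
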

\begin{proof}
Let $w$ be a $\less$-Lyndon word. Assume that there is nonempty $u$ such that $w=uvu$. Then $uvu\less uuv$ implies $vu\less uv$, which in turn yields $vuu\less uvu$, a contradiction. 
\end{proof}

\begin{lemma}\label{lem:4}
    Let $w = uv$ be a $\less$-Lyndon word and let $z$ be the periodic root of $u$.
    Then $z$ is a $\less$-Lyndon word.
\end{lemma}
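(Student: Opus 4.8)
We have $w = uv$ is a $\less$-Lyndon word, $z$ is the periodic root of $u$, so $u = z^k z'$ where $z'$ is a proper prefix of $z$ (possibly empty) and $k \geq 1$. We want $z$ to be $\less$-Lyndon. Since $z$ is the periodic root, it's primitive by the remark in the Preliminaries, so it suffices to show $z$ is the $\less$-minimal conjugate of itself, i.e., $z \less (z\text{'s rotations})$, or equivalently that $z$ is unbordered and every proper nonempty suffix $s$ of $z$ satisfies $z \less s$ ... actually the cleanest route: a primitive word is $\less$-Lyndon iff it is strictly smaller than all its proper nonempty suffixes. So I need: for every nonempty proper suffix $s$ of $z$, we have $z \less s$.

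The key idea: suffixes of $z$ are "visible" inside $w$ because $u$ is a power-prefix of $z$. Let $s$ be a nonempty proper suffix of $z$, so $z = ts$ with $t$ nonempty. Since $u = z^k z'$ with $k \geq 1$, the word $z$ itself is a prefix of $w$ (as $|z| \le |u| < |w|$, using $k\ge1$), so $z <_p w$ and hence $z \cdot(\text{stuff}) = w$. Also, $s$ occurs as a prefix of a suffix of $w$: namely $sz^{k-1}z'v$ is a suffix of $w$ beginning with $s$, so $sz^{k-1}z'v <_s w$. Wait — I actually want to compare $z$ and $s$ directly via the Lyndon property of $w$. The Lyndon property tells me $w \less (\text{every proper suffix of } w)$. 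So $w \less s z^{k-1} z' v$. Since $w = z \cdots$ and I'd like to conclude $z \less s$ from $z\text{-prefix-of-}w \less s\text{-prefix-of-suffix}$, I compare the two words letter by letter: either they differ within the first $|z|$ and $|s| \le |z|$ letters giving $z \less s$ directly (if the mismatch is before position $|s|$, need $s \less z$ impossible... let me be careful), or $s$ is a prefix of $z$. The latter case is the obstacle.

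So the **main obstacle** is the case where a proper suffix $s$ of $z$ is also a prefix of $z$ — i.e., $z$ would be bordered, contradicting primitivity via Lemma \ref{lem:self} once we know $z$ is Lyndon, but we can't use that yet. Instead: if $s \pref z$ and $s \suff z$ then $s$ is a border of $z$; I need to rule this out independently. Here I use that $z$ is the periodic root of $u$: a border of $z$ would give $z$ a period $|z| - |s| < |z|$, and since $u = z^k z'$ this propagates to give $u$ a period smaller than $|z|$, contradicting that $|z|$ is the period of $u$ (the periodic root is primitive and minimal). Actually more directly: the periodic root is unbordered — this should follow from minimality of the period, since a border of $z$ of length $|s|$ yields period $|z|-|s|$ of $z$, hence (as $u$ is a prefix of $z^\omega$) a period $|z|-|s|$ of $u$ that is smaller than $|z|$, contradiction. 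With $z$ unbordered, the dichotomy above forces the mismatch to occur strictly before position $\min(|s|,|z|)$, and since $w$ starts with $z$ while the suffix $sz^{k-1}z'v$ starts with $s$, the inequality $w \less sz^{k-1}z'v$ at that mismatch position yields exactly $z \less s$. This handles all nonempty proper suffixes $s$ of $z$, so $z$ is $\less$-Lyndon.

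One edge case to check in writing up: when $k=1$ and $z' $ is empty, $u=z$ and the argument is the same with the suffix $zv <_s w$; and when $|s| > |z|/2$ versus $\le |z|/2$ makes no difference to the argument since we only use that $z$ is unbordered. I'd also double-check the degenerate possibility $u$ empty — but $u$ empty makes $z$ empty, not the periodic root of anything nonempty, so implicitly $u \neq \varepsilon$; if the paper allows $u = \varepsilon$ the statement is vacuous or needs $z=\varepsilon$ excluded, and I'd note this.
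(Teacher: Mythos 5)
There is a genuine gap in your argument: the claim that the periodic root $z$ of $u$ is unbordered because ``a border of $z$ of length $|s|$ yields period $|z|-|s|$ of $z$, hence \dots a period $|z|-|s|$ of $u$'' is false. A period of $z$ does not propagate to $z^kz'$. Take $z=\0\1\0$, which has period $2$ (border $\0$); then $z^2=\0\1\0\0\1\0$ does \emph{not} have period $2$ --- its smallest period is $3$, so $\0\1\0$ is the periodic root of $\0\1\0\0\1\0$ even though it is bordered. The paper's Preliminaries only assert that the periodic root is primitive, not that it is unbordered, and in general it need not be. In the situation of the lemma $z$ does turn out to be unbordered, but that is essentially the content of the lemma itself (Lyndon implies unbordered via Lemma~\ref{lem:self}); you cannot obtain it as a cheap preliminary step from the period of $u$ alone.

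This gap is the crux, because when $s$ is a border of $z$ your comparison of $w$ with the proper suffix $s\,z^{k-1}z'v$ breaks down: the two words agree for the first $|s|$ letters, so the first mismatch falls beyond position $|s|$, and Lyndon-ness of $w$ no longer yields $z\less s$ (indeed in that case $s\less z$ as a proper prefix, so you would actually need to rule the case out, not hope to win the comparison). The paper avoids the issue entirely by writing $z=z_1z_2$ with $z_2z_1$ the $\less$-Lyndon conjugate of $z$, comparing the prefix $\tilde z$ of $u$ against the suffix $z_1^{-1}u$ of $u$ (both prefixes of powers of conjugates of $z$), getting one inequality from $uv$ being Lyndon and the reverse from $z_2z_1$ being Lyndon, and concluding $\tilde z=z_1^{-1}u$, which forces $z_1=\varepsilon$ because otherwise $|z_1|<|z|$ would be a period of $u$. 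That argument proves Lyndon-ness (and hence unborderedness) in one step rather than relying on unborderedness up front. The rest of your plan --- the reduction to $z\less s$ for every proper nonempty suffix $s$, and the letter-by-letter comparison once a mismatch within the first $|s|$ positions is guaranteed --- is fine; only the justification that such a mismatch exists needs to be replaced.
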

\begin{proof}
Let $u=(z_1z_2)^kz'$ where $z=z_1z_2$ is the periodic root of $u$, $z_1<_p z$ so that $z_2z_1$ is the $\less$-Lyndon conjugate of $z$, and $z'<_p z$. Let $\tilde z$ be the prefix of $u$ of length $|z_1^{-1}u|$. Since $uv$ is $\less$-Lyndon, we have $\tilde z\less z_1^{-1}u$, and since $z_2z_1$ is $\less$-Lyndon, we have $z_1^{-1}u\less \tilde z$. Therefore $\tilde z=z_1^{-1}u$,  which implies that $z_1$ is empty, otherwise $|z_1|$ is a period of $u$.  
\end{proof}

\begin{lemma}\label{lem:5}
    Let $w$ be a $\less$-Lyndon word and let $z^kz' \cdot \1 \pref w$ where $z$ is the periodic root of $z^kz'$, and $z' \cdot \0 \pref z$.
    Then $z^kz' \cdot \1$ is a $\less$-Lyndon word. 
\end{lemma}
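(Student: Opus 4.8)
The plan is to show that $p := z^kz'\cdot\1$ is primitive and strictly lexicographically smaller than each of its proper conjugates, splitting the conjugates according to whether the "cut point" falls inside the last letter, inside $z'$, or inside one of the $k$ copies of $z$. First I would record the key structural facts: since $w$ is $\less$-Lyndon, its prefix $z^kz'\cdot\1$ cannot be imprimitive (a $j$-th power prefix would force a period of $w$ shorter than itself, hence a border of that power, contradicting that $w$ is unbordered by Lemma~\ref{lem:self}, or more directly one uses that $z$ is the periodic root of $z^kz'$ and appending a letter that does \emph{not} continue the period $z$ — namely $\1$ where $z$ continues with $\0$ — keeps the word primitive). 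Also, by Lemma~\ref{lem:4} applied to $w$, the periodic root $z$ is itself $\less$-Lyndon, so $z$ is unbordered and $\less$-minimal in $[z]$; this will be the workhorse.

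Next I would compare $p$ with its conjugates of the form (suffix)$\cdot$(prefix). A conjugate obtained by cutting just before the final $\1$ is $\1\cdot z^kz'$, which starts with $\1$ while $p$ starts with the first letter of $z$, and that first letter is $\0$ (because $z'\cdot\0\pref z$ forces $z$ to begin with $\0$ unless $z'$ is empty, and even if $z'$ is empty the hypothesis says $\0\pref z$); hence $p\less \1\cdot z^kz'$. For a cut inside the block $z'$, writing $z' = z_1z_2$ with $z_1$ a proper prefix, the conjugate begins with $z_2\1 z^k z_1$; I compare this with $p$ which begins with $z_1 z_2\cdots$. Here I use that $z$ is $\less$-Lyndon to control $z_2$ versus the relevant prefix of $z$: if $z_2$ is not a prefix of (a prefix of) $z$ they differ at the first mismatch and Lyndon-minimality of $z$, together with the placement of the "new" letter $\1$ exactly where $z$ has $\0$, gives $p$ strictly smaller. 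The genuinely delicate range is a cut inside one of the interior copies of $z$ in $z^k$ (when $k\ge 1$): writing $z = z_1 z_2$, the conjugate starts $z_2 z^{k-1} z' \1 z_1 \cdots$ and must be compared with $p = z_1 z_2 z^{k-1}\cdots$. Since $z_2 z_1$ is a conjugate of the $\less$-Lyndon word $z$, we have $z_1 z_2 \less z_2 z_1$ unless $z_1$ is empty; if they agree through length $|z|$ we peel off one copy of $z$ and repeat, and the comparison is eventually decided either by a strict inequality coming from $\less$-minimality of $z$, or, if everything matches all the way to position $|z^kz'|$, by the inserted letter $\1$ landing against the letter of $z$ at that position, which is $\0$ by the hypothesis $z'\cdot\0\pref z$ — so $p$ wins.

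The main obstacle I anticipate is exactly this last bookkeeping: handling the case where a conjugate agrees with $p$ on a long prefix that spans several copies of $z$, and isolating the precise position where the comparison is resolved. The clean way to manage it is to prove the auxiliary statement that any proper cyclic rotation of $z^kz'\1$ whose cut point is \emph{not} a multiple of $|z|$ (measured from the start) is already $\succ p$ by the Lyndon property of $z$ alone, and then to treat the finitely many cut points at multiples of $|z|$ directly: rotating by $|z|$ gives $z^{k-1}z'\1 z$, and comparing with $z^{k-1}z'\1 z'^{-1}z\cdots$ one sees the first difference occurs within the $z'\1$ block against the corresponding prefix of $z$, where $\1$ sits opposite the $\0$ guaranteed by $z'\cdot\0\pref z$; so again $p\less$ this rotation. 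Once every proper conjugate is shown strictly larger, primitivity is automatic and $p$ is $\less$-Lyndon, completing the proof.
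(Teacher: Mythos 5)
Your strategy — showing directly that $p=z^kz'\cdot\1$ is strictly $\less$-smaller than each of its nontrivial conjugates by casing on where the cut falls — is a genuinely different route from the paper's. The paper instead supposes $p$ has a shortest border, necessarily of the form $r\cdot\1$, uses unborderedness of $z$ to force $r\cdot\1$ to be a suffix of $z'\cdot\1$, and then notes that $r\cdot\0$ is a factor of $z$ (hence of $w$) while $r\cdot\1$ is a prefix of $w$, contradicting that $w$ is $\less$-Lyndon; since $p$ is thus unbordered and hence its own periodic root, Lemma~\ref{lem:4} finishes. Your approach avoids the last appeal to Lemma~\ref{lem:4}, but pays for it with a conjugate-by-conjugate comparison; the paper is shorter because it leans once more on the ambient Lyndon word $w$ at the contradiction, rather than working only inside $p$.

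Your handling of the range you flag as delicate (a cut at $i|z|+m$ with $0<m<|z|$) is, however, not quite right as written. The worry that $p$ and the conjugate might agree all the way to the inserted $\1$ never actually materialises, and the reason is the unborderedness of $z$, which you record but do not deploy at the decisive step. Writing $z=z_1z_2$ with $|z_1|=m$, the first $|z|-m$ letters of the conjugate are $z_2$; if these agreed with the first $|z|-m$ letters of $p$, which are $z_1(z_1^{-1}z_2)$ truncated, i.e.\ the prefix of $z$ of length $|z|-m$, then $z_2$ would be a border of $z$ — impossible. So the first mismatch occurs at some $j^*\leq |z|-m$, comfortably inside the periodic prefixes of both words, and there $z\less z_2z_1$ gives $\0$ for $p$ and $\1$ for the conjugate. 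The inserted $\1$ of the conjugate, which sits at position $(k-i)|z|+|z'|-m+1$, is therefore never reached, and — contrary to your claim — the letter of $p$ at that position need not be $\0$ at all (e.g.\ $z=\0\0\0\1$, $z'=\0\0$, $m=3$ puts a $\1$ of $p$ there). The cut-inside-$z'$ case needs analogous care before the new $\1$ can be played off against $z[|z'|+1]=\0$. With those repairs your plan does close up; but the mechanism that makes everything local is precisely the unborderedness of $z$, which the paper exploits far more economically.
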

\begin{proof}
Suppose that $r\cdot \1$ is a border of $z^kz'\cdot \1$. By Lemma \ref{lem:4} and Lemma \ref{lem:self}, $z$ is unbordered, which implies that $r\cdot \1$ is a suffix of $z'\cdot \1$. Therefore $r\cdot \0$ is a factor of $z$. Since $r\cdot \1$ is a prefix of $w$, we have a contradiction with $w$ being $\less$-Lyndon.

\[
\begin{tikzpicture}
[tecka/.style={circle,inner sep=1pt, fill}
]
\def\d1{0} \def\dz{4} \def\dzz{2*\dz} \def\dzc{1.5} \def\dzp{\dzz+\dzc} \def\dr{1}
\node[inner sep=0] (0) at (\d1,0){${\mathbf \vert}$};
\node (z) at (\dz,0){$\vert$};
\node (zz) at (\dzz,0){$\vert$};
\node[inner sep=0] (zp) at (\dzp,0){};
\draw (0) -- node[above] {{\small $z$}} (z);
\draw (z) -- node[above] {{\small $z$}} (zz);
\draw (zz) -- node[above] {} (zp);
\draw[thick] (0.center) -- (zp.center);
\node at (\dzp+0.3,0) {$\dots$};	
\draw[gray,rounded corners=5pt] (\dz,0) -- (\dz+0.1,0.3) --
node[fill=white] {\textcolor{black}{{\scriptsize $z'\0$}}}  (\dz+\dzc-0.1,0.3) --(\dz+\dzc,0);
\draw[gray,rounded corners=5pt] (\d1,0) -- (\d1+0.1,0.3) --
node[fill=white] {\textcolor{black}{{\scriptsize $z'\0$}}}  (\d1+\dzc-0.1,0.3) -- (\d1+\dzc,0);
\draw[gray,rounded corners=5pt] (\dzz,0) -- (\dzz+0.1,0.3) --
node[fill=white] {\textcolor{black}{{\scriptsize $z'\1$}}}   (\dzp-0.1,0.3) -- (\dzp,0);
\draw[rounded corners=5pt] (\dzp-\dr,0) -- (\dzp-\dr+0.1,-0.3)  --
node[fill=white] {\textcolor{black}{{\scriptsize $r\,\1$}}}  (\dzp-0.1,-0.3) -- (\dzp,0);
\draw[rounded corners=5pt] (\d1,0) -- (\d1+0.1,-0.3)  --
node[fill=white] {\textcolor{black}{{\scriptsize $r\,\1$}}}  (\d1+\dr-0.1,-0.3) -- (\d1+\dr,0);
\draw[rounded corners=5pt](\dz+\dzc-\dr,0) -- (\dz+\dzc-\dr+0.1,-0.3) -- 
node[fill=white] {\textcolor{black}{{\scriptsize $r\,\0$}}}  (\dz+\dzc-0.1,-0.3) -- (\dz+\dzc,0);
\end{tikzpicture}
\]
We have shown that $z^kz'\cdot\1$ is unbordered, that is, it is its own periodic root. The claim now follows from Lemma \ref{lem:4}.
\end{proof}

\begin{lemma}\label{lem:onlyone}
Let $u$ and $v$ be nonempty words such that both $uv$ and $vu$ are Lyndon. Then zero is the only cyclic occurrence of $u$ in $uv$.  
\end{lemma}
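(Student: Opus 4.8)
The plan is the following. First I would pin down the hypothesis: since $uv$ and $vu$ are conjugate, they cannot both be Lyndon for the same lexicographic order (that would force $uv=vu$, contradicting primitivity of a Lyndon word), so I may assume $uv$ is $\less$-Lyndon and $vu$ is $\LESS$-Lyndon. Equivalently, $w:=uv$ is the $\less$-minimal conjugate in $[w]$ and $vu$ is the $\less$-maximal one, and both $uv$ and $vu$ are unbordered by Lemma~\ref{lem:self}. Now I would assume for contradiction that some $m$ with $0<m<n:=|w|$ is a cyclic occurrence of $u$ in $w$, witnessed by the conjugate $w^{(m)}=w_2w_1$, where $w=w_1w_2$, $|w_1|=m$ and $u\pref w_2w_1$. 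The first step is to rule out a ``wrap-around'': if $|u|>|w_2|$, then $u=w_2p$ with $p\pref w_1$, so $w_2$ is a suffix of $w$ and, being a prefix of $u$ and hence of $w$, also a prefix of $w$; since $0<m<n$ it is then a nonempty border of $w$, contradicting Lemma~\ref{lem:self}. Thus $|u|\le|w_2|=n-m$, i.e.\ $m\le|v|$, and $u$ occurs as an ordinary factor of $w$ starting at position $m$. I would then split into the cases $0<m\le|u|$ and $|u|<m\le|v|$.

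In the case $0<m\le|u|$ I would argue that, because $m\ge1$, this occurrence of $u$ extends past position $|u|$ into the ``$v$-part'' of $w=uv$, and comparing this occurrence with the factorization $w=uv$ on the positions $|u|,\dots,|u|+m-1$ forces the length-$m$ prefix of $v$ to equal the length-$m$ suffix of $u$. Calling this word $t$ and writing $u=u't$, $v=tv'$, I then get $vu=t\,v'u'\,t$, so $t$ is both a prefix and a suffix of $vu$ and $1\le|t|=m\le|v|<|vu|$; hence $t$ is a border of $vu$, contradicting that $vu$ is unbordered.

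In the case $|u|<m\le|v|$ the occurrence of $u$ lies inside $v$, so $v=\alpha u\beta$ with $\alpha$ nonempty (of length $m-|u|$) and $w=u\alpha u\beta$. If $\beta$ is empty, then $u$ is a proper prefix and a proper suffix of $w=u\alpha u$, hence a border of $w$ --- impossible; so $\beta$ is nonempty. Since $w$ is primitive, the rotation amounts $0$, $|u|$, $|u\alpha|$, $|u\alpha u|$ are four distinct elements of $[0,n)$, so $uv=u\alpha u\beta$, $vu=\alpha u\beta u$, $w^{(m)}=u\beta u\alpha$ and $\beta u\alpha u$ are pairwise distinct conjugates of $w$. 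From $\less$-minimality of $uv$ I get $u\alpha u\beta\less u\beta u\alpha$; stripping the common prefix $u$ gives $\alpha u\beta\less\beta u\alpha$, in particular $\alpha u\beta\neq\beta u\alpha$. From $\less$-maximality of $vu$ I get $\beta u\alpha u\less\alpha u\beta u$; since $\alpha u\beta$ and $\beta u\alpha$ have equal length and differ, the order between $\beta u\alpha u$ and $\alpha u\beta u$ is the same as that between $\beta u\alpha$ and $\alpha u\beta$, so $\beta u\alpha\less\alpha u\beta$ --- contradicting the previous inequality. This finishes all cases.

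I expect the last case to be the main obstacle. The wrap-around step and the case $m\le|u|$ each exhibit an explicit border (of $w$, respectively of $vu$) and use nothing beyond Lemma~\ref{lem:self}. When $|u|<m$, however, the two occurrences of $u$ in $w$ are disjoint, so no periodicity --- and thus no border --- is forced directly, and the contradiction must come from the order. The delicate points are choosing the right four conjugates, using primitivity to check that they are genuinely distinct (so that the strict Lyndon inequalities are available), and arranging things so that the single pair of words $\alpha u\beta$, $\beta u\alpha$ can be compared once from the left (stripping a common prefix, via minimality of $uv$) and once from the right (stripping a common suffix, via maximality of $vu$).
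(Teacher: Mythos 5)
Your proof is correct, but it is considerably longer than the paper's and divides into cases that the paper's argument handles uniformly. The paper simply considers an arbitrary conjugate of $uv$ of the form $uv'$ (i.e.\ any cyclic occurrence of $u$, including the wrap-around and overlapping ones). From $uv$ being $\less$-Lyndon, $uv\less uv'$, hence $v\less v'$; from $vu$ being $\LESS$-Lyndon, $vu\LESS v'u$, hence $v\LESS v'$; since $|v|=|v'|$ these two non-strict inequalities force $v=v'$, and primitivity of $uv$ then pins the occurrence to position $0$. Your Case~2 ($|u|<m\le|v|$) is precisely this argument in disguise: with $v'=\beta u\alpha$, ``strip the common prefix $u$'' is the passage from $uv\less uv'$ to $v\less v'$, and ``append $u$ and compare'' is the passage from $vu\LESS v'u$ to $v\LESS v'$. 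What you add is a separate wrap-around exclusion and a Case~1 ($0<m\le|u|$), both resolved by exhibiting an explicit border of $uv$ or of $vu$; these are correct and quite transparent, but they are not needed, since the $v=v'$ argument already covers them (with $v'$ defined appropriately even when the occurrence overlaps the prefix $u$ or wraps around). So the two proofs share the same engine in the disjoint-occurrence case, but the paper's choice to compare whole conjugates rather than literal factor occurrences eliminates your preliminary case analysis. One small stylistic point: you do not really need the four distinct conjugates to be pairwise distinct; it suffices that $u\beta u\alpha\neq u\alpha u\beta$ and $\beta u\alpha u\neq\alpha u\beta u$, which follows once $\alpha u\beta\neq\beta u\alpha$ is established, and that in turn is delivered by the strict Lyndon minimality applied to one pair only.
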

\begin{proof}
Without loss of generality, let $uv$ be $\less$-Lyndon word.
Suppose that $uv'$ is a conjugate of $uv$. From $uv \less uv'$ we deduce $v \less v'$, which is equivalent to $v'\LESS v$ since $|v|=|v'|$. From $vu$ being $\LESS$-Lyndon, we have $v\LESS v'$, and therefore $v=v'$. The claim now follows from the fact that a primitive word is not a nontrivial conjugate of itself. 
\end{proof}

\section{Fully bordered words}
We say that a binary word $w$ is \emph{fully bordered} if $|w|>1$, and there are exactly two unbordered conjugates of $w$. If $w$ is fully bordered and $uv, vu$ are the unbordered conjugates of $w$, then we will slightly abuse terminology and say that $(u,v)$ is a \emph{fully bordered pair}. Note the obvious but important fact that $(u,v)$ is a fully bordered pair if and only if the local period of $uv$ is less than $|uv|$ at all points except $0$ and $|u|$.

We adopt the following notation. Let $z$ be the periodic root of a nonempty word $w$. Then $w=z^kz'$, where $z'$ is a nonempty prefix of $z$, and we define $s_w:=z'$, $t_w:=(z')^{-1}z$ and $k_w:=k$, which yields $w=(s_wt_w)^{k_w}s_w$. Note that we assume that $s_w$ is not empty. This means that if $w=z^k$, then $s_w=z$, $t_w$ is empty, and $k_w=k-1$. In particular, $k_w=0$ if and only if $w$ is unbordered.

\begin{lemma}\label{lem1}
Let $(u,v)$ be a fully bordered pair. Then:
\renewcommand{\theenumi}{\alph{enumi}}
\begin{enumerate}
	\item $(v,u)$ is a fully bordered pair. \label{ch0}
	\item  $uv$, $vu$, $u^Rv^R$ and $v^Ru^R$ are Lyndon words. \label{ch1}
	\item $s_ut_u$ and $(t_us_u)^R$ are Lyndon words. \label{chlyn}
	\item If $k_u>1$, let $u'=(s_ut_u)^{k_u-1}s_u$. Then $s_{u'}=s_u$ and $t_{u'}=t_u$. \label{ch3}
\end{enumerate}
\end{lemma}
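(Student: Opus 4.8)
The plan is to handle the four items in turn, with (b) carrying the main conceptual content and (a), (c), (d) reducing to it or to elementary periodicity arguments. Item (a) is immediate: $[vu]=[uv]$, so $vu$ has exactly the same two unbordered conjugates as $uv$, namely $uv$ and $vu$ themselves, hence $(v,u)$ is a fully bordered pair. For (b) I would first isolate the sub-claim: \emph{if $(p,q)$ is any fully bordered pair, then both $pq$ and $qp$ are Lyndon words.} The $\less$-Lyndon conjugate and the $\LESS$-Lyndon conjugate of $pq$ are both unbordered by Lemma~\ref{lem:self}. They are moreover distinct: a fully bordered word is primitive and has length $>1$, so each of these Lyndon conjugates is unbordered of length $>1$ and therefore starts and ends with different letters; consequently the $\less$-Lyndon conjugate starts with $\0$ and the $\LESS$-Lyndon conjugate starts with $\1$. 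Since $pq$ has exactly two unbordered conjugates, $\{pq,qp\}$ must coincide with the set $\{\less\text{-Lyndon conjugate},\,\LESS\text{-Lyndon conjugate}\}$, which proves the sub-claim.

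To finish (b), apply the sub-claim to $(u,v)$; for the reversed words, observe that $x\mapsto x^R$ is a bijection of $[uv]$ onto $[(uv)^R]=[v^Ru^R]$ carrying unbordered words to unbordered words (a border of $x^R$ reverses to a border of $x$). Hence $v^Ru^R$ has exactly the two unbordered conjugates $v^Ru^R$ and $(vu)^R=u^Rv^R$, which are distinct because $uv\neq vu$; thus $(v^R,u^R)$ is a fully bordered pair, and the sub-claim applied to it gives that $u^Rv^R$ and $v^Ru^R$ are Lyndon.

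For (c) I may assume by (a) that $uv$ is $\less$-Lyndon (otherwise exchange the roles of $\less$ and $\LESS$ throughout and use the mirror image of Lemma~\ref{lem:4}). Since $s_ut_u$ is by definition the periodic root of $u$, Lemma~\ref{lem:4} applied to $uv$ shows $s_ut_u$ is $\less$-Lyndon. For the other word I would check that the periodic root of $u^R$ equals $(t_us_u)^R$: reversing a word preserves all its periods, so $|s_ut_u|$ is the smallest period of $u^R$ (a strictly smaller one would reverse to a strictly smaller period of $u$, contradicting that $s_ut_u$ is the periodic root of $u$), and the prefix of $u^R=s_u^R(t_u^Rs_u^R)^{k_u}$ of that length is $s_u^Rt_u^R=(t_us_u)^R$. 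Since $uv$ starts with $\0$ and ends with $\1$, its reversal $v^Ru^R$ is $\LESS$-Lyndon, so by (b) the word $u^Rv^R=(vu)^R$ is $\less$-Lyndon; Lemma~\ref{lem:4} applied to $u^Rv^R$ then shows that the periodic root $(t_us_u)^R$ of $u^R$ is $\less$-Lyndon.

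Finally, (d) is a pure periodicity statement (the fully bordered hypothesis is not needed). Write $z=s_ut_u$, which is primitive, being a periodic root. Since $k_u>1$, the word $u'=z^{k_u-1}s_u$ is a prefix of $u$, admits $|z|$ as a period, and has length at least $|z|+|s_u|>|z|$; in particular $z$ is a prefix of $u'$. If the periodic root of $u'$ had length $p<|z|$, then $z$, being a factor of the period-$p$ word $u'$, would itself have period $p$, contradicting primitivity of $z$. Hence the periodic root of $u'$ is its prefix of length $|z|$, namely $z=s_ut_u$, so $s_{u'}=s_u$ and $t_{u'}=t_u$. I expect the only real obstacle to be the bookkeeping in (c): correctly identifying the periodic root of $u^R$ as $(t_us_u)^R$ and keeping track of which lexicographic order makes each of $uv$ and $u^Rv^R$ Lyndon so that Lemma~\ref{lem:4} (or its mirror) applies; everything else is routine once the sub-claim of (b) is available.
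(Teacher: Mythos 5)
Your treatment of items (a), (b), and (c) is correct and follows essentially the same route as the paper: (a) is definitional, (b) identifies the two unbordered conjugates as the two Lyndon conjugates (your sub-claim is exactly the paper's argument made explicit, plus the reversal being a bijection preserving unborderedness), and (c) applies Lemma~\ref{lem:4} to the Lyndon words $uv$ and $u^Rv^R$, with the observation that the periodic root of $u^R$ is $(t_us_u)^R$. The bookkeeping about which of the two orders makes each word Lyndon is a bit laboured but correct.

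Item (d), however, has a real gap. You deduce a contradiction from the fact that $z=s_ut_u$ would acquire a period $p<|z|$, and you claim this contradicts primitivity of $z$. That implication is false: a primitive word can perfectly well have a period shorter than its length. Concretely, take $u=\0\0\1\0\0\0\1\0\0$. Its periodic root is $z=\0\0\1\0$, which is primitive; here $s_u=\0$, $t_u=\0\1\0$, $k_u=2$, so $u'=zs_u=\0\0\1\0\0$. This $u'$ has period $3<|z|$, periodic root $\0\0\1$, and $s_{u'}=\0\0\neq s_u$. So (d) is \emph{not} a pure periodicity statement as you assert: without the fully bordered hypothesis it is simply false. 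What the argument actually needs is that $z=s_ut_u$ is \emph{unbordered}, which you do know from (c) (Lyndon words are unbordered by Lemma~\ref{lem:self}), and which the paper invokes directly. Then the standard fact recalled in the preliminaries -- an unbordered word cannot occur as a factor of a word with a strictly smaller period -- gives that $u'$, containing $z$ as a prefix of length $\le|u'|$, cannot have period $<|z|$; hence its periodic root is its length-$|z|$ prefix, namely $z$, and $s_{u'}=s_u$, $t_{u'}=t_u$ follow.
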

\begin{proof}
 \eqref{ch0} Follows directly from the definition.

\eqref{ch1} The definition of a fully bordered word implies that $uv$ is primitive and contains both letters $\0$ and $\1$. Therefore, there are at least two unbordered conjugates of $uv$, namely the Lyndon conjugates with respect to $\less$ and $\LESS$. This implies that $uv$ and $vu$ are Lyndon words. A word is bordered if and only if its reversal is. Hence $u^Rv^R$ and $v^Ru^R$ are the only unbordered conjugates of $(uv)^R$, which means that they are its two Lyndon conjugates. 

\eqref{chlyn} Follows from \eqref{ch1} by Lemma \ref{lem:4}.

\eqref{ch3} The word $u'$ has a period $|s_ut_u|$. It is the least period since $s_ut_u$ is unbordered.
\end{proof}

The key for a characterization of fully bordered words is the following theorem.

\begin{theorem}\label{thm:1}
	Let $(u,v)$ be a fully bordered pair such that $|v|\leq |u|$ and $|uv|>2$. Let $u'=(s_ut_u)^{-1}u$. Then either 
	\begin{itemize}
		\item $\left(u',v\right)$ is a fully bordered pair, or
		\item $v=v'u'v'$ and $\left(u,v'\right)$ is a fully bordered pair.
	\end{itemize}
\end{theorem}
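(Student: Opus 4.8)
The plan is to exploit the characterization of a fully bordered pair $(u,v)$ via local periods: $(u,v)$ is fully bordered exactly when $uv$ is primitive and the local period of $uv$ is trivial at the points $0$ and $|u|$ and nontrivial (hence, by primitivity, shorter than $|uv|/2$) everywhere else. So passing from $u$ to $u'=(s_ut_u)^{-1}u$ — i.e.\ deleting one periodic root of $u$ from its front — I must show that the word $u'v$ still has exactly two trivial local periods, unless something special happens at the seam, in which case the ``extra'' border of $v$ forces the second alternative.

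\medskip

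First I would set up notation: write $p=s_ut_u$ for the periodic root of $u$, so $u=p^{k_u}s_u$, $u'=p^{k_u-1}s_u$, and $u'v$ is obtained from $uv$ by deleting the prefix $p$. By Lemma~\ref{lem1}\eqref{ch1} and Lemma~\ref{lem:4}, $p$ is unbordered and $(s_ut_u)$, $(t_us_u)^R$ are Lyndon; moreover $|v|\le|u|$ means $|v|\le|p^{k_u}s_u|$, so the deleted block $p$ sits entirely inside the ``$u$-part'' of $uv$ and the point $|u|$ maps to the point $|u'|=|u|-|p|$ in $u'v$. The local period of $uv$ at any point $m$ with $|p|\le m$ equals the local period of $u'v$ at $m-|p|$ — provided the local periodic root, read cyclically, does not wrap across the deleted block; since local periodic roots of a primitive word have length $<|uv|/2$ except at the two trivial points, and $|p|\le|u|/2<|uv|/2$... the bookkeeping of which cyclic occurrences survive the deletion is the first thing to nail down carefully. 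The upshot I expect: points $1,\dots,|p|-1$ of $uv$ correspond to nothing in $u'v$ (they are absorbed), the point $|p|$ of $uv$ (which, because $u$ has period $|p|$, has the same local behaviour as the point $0$) becomes the new point $0$ of $u'v$, and all remaining points shift down by $|p|$ and keep their local periods. Hence $u'v$ is primitive with trivial local periods at $0$ and $|u'|$, and nontrivial local periods elsewhere \emph{except possibly} at the handful of points that were ``created'' by the deletion, namely those near the seam where the suffix $s_u$ of $u'$ meets $v$ and wraps to the front $p$.

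\medskip

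The second alternative must come from exactly that seam. The point is that in $uv$ the stretch governed by the period $p$ of $u$ protected certain points from being trivial; after removing one copy of $p$, a long border of $u'v$ could appear that is only possible because $v$ itself carries a border. Concretely, I would argue: if $u'v$ has a conjugate with a trivial local period at some point other than $0$ and $|u'|$, then (comparing with $uv$, which does not) this trivial local period, when it is dragged back into $uv$, is ``broken'' precisely by the reinserted block $p$; tracing the implied overlap gives a nonempty $v'$ with $v=v'u'v'$ and simultaneously forces every point of $uv'$ other than $0$ and $|u|$ to have nontrivial local period, i.e.\ $(u,v')$ is a fully bordered pair. Here I would use Lemma~\ref{lem:onlyone} (zero is the only cyclic occurrence of $u$ in $uv$, and symmetrically of $v$ in $vu$) to pin down the unique way $v$ can split as $v'u'v'$, and Lemmas~\ref{lem:5} and \ref{lem1}\eqref{chlyn} to verify that the resulting conjugates are genuinely Lyndon (unbordered) so that the ``exactly two'' count is preserved rather than exceeded.

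\medskip

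The main obstacle, I expect, is the case analysis at the seam: controlling which new short borders $u'v$ can acquire and showing that the only obstruction to $(u',v)$ being fully bordered is a border of $v$ of the very rigid form $v=v'u'v'$ (with $|v'|$ forced). This is where one has to combine the periodicity of $u$ (period $p=s_ut_u$), the Lyndon/unbordered properties from Lemma~\ref{lem1}, and the primitivity of $uv$ to rule out any other configuration — the Fine–Wilf-type overlap arguments and the careful use of ``local period $<|uv|/2$ off the two trivial points'' are the technical heart, and the accompanying figure-style diagram tracking $s_u$, $t_u$, $p$ and the two copies of $v'$ will likely be needed to keep the overlaps straight.
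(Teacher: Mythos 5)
Your plan correctly identifies the right framework: use the local-period characterization of fully bordered pairs, compare local periodic roots of $uv$ and $u'v$ point by point, and treat the ``seam'' as the only place where something can go wrong. That is indeed the paper's strategy. However, what you have written is a proof \emph{plan}, not a proof, and the pieces you explicitly defer — ``the bookkeeping of which cyclic occurrences survive the deletion,'' ``the case analysis at the seam,'' ``the Fine--Wilf-type overlap arguments'' — are precisely the content of the paper's argument (its Claims 1--4 and the case split). In the paper's proof, before any comparison can be made one must establish that $t_us_u$ is $\LESS$-Lyndon, that $t_u$ is a prefix and a suffix of $v$, and that the local periodic root of $s_ut_u$ at any interior point of $s_u$ is ``short'' in a specific quantified sense (prefix of $s_2p$, suffix of $qs_1$, where $p,q$ are the longest common prefix/suffix of $v$ with $t_us_u$/$s_ut_u$). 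None of these claims appear in your sketch, and without them the transfer of local periods from $uv$ to $u'v$ does not go through.

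A second gap is the dichotomy. The paper splits on a clean, checkable condition: whether $(s_ut_u)^{k_u}$ is a factor of $v$. Case~1 (it is not) yields $(u',v)$ fully bordered; Case~2 (it is) forces, via occurrence counting and the Lyndon property of $uv'=(s_ut_u)^{k_u+1}v_1$, the exact decomposition $uv=s_ut_u\,r\,r$ with $r=u'v'$, hence $v=v'u'v'$. You instead propose a contrapositive: assume $(u',v)$ is not fully bordered and ``trace the implied overlap'' back to $v=v'u'v'$. This is logically sound in outline but substantially harder to execute — a failure of the first alternative gives you a single unwanted trivial local period in $u'v$, and extracting from that the rigid factorization $v=v'u'v'$ with $|v'|$ determined requires the same occurrence-counting machinery, so you would essentially be forced to rediscover the paper's dichotomy anyway. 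Finally, two small inaccuracies: the local period of $uv$ at the point $|s_ut_u|$ is \emph{not} trivial (it equals $|s_ut_u|$ since $u$ has that period), so the parenthetical ``same local behaviour as the point $0$'' is wrong (though the conclusion you want, that the new point $0$ of $u'v$ has trivial local period, is automatic); and $|s_ut_u|\le|u|/2$ can fail when $k_u=1$ and $t_u$ is nonempty, so that bound cannot be used in the bookkeeping.
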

\begin{proof}

Suppose, without loss of generality, that $\0$ is the first letter of $u$. Then $uv$ and $u^Rv^R$ are $\less$-Lyndon words, and $vu$ and $v^Ru^R$ are $\LESS$-Lyndon words. In particular, $\0$ is a suffix of $s_u$, and since $|u|\geq 2$, it follows that $k_u\geq 1$. 
 
Let $p$ be the longest common prefix of $v$ and $t_us_u$. Then $p$ is a proper prefix of $v$ since $|t_us_u|$ is not a period of $uv$. It is also a proper prefix of $t_us_u$ since $vu$ is unbordered. This implies, since $vu$ is $\LESS$-Lyndon, that $p\cdot \0$ is a prefix of $t_us_u$ and $p\cdot\1$ is a prefix of $v$. Symmetrically, we define $q$ as the longest common suffix of $v$ and $s_ut_u$ and observe that $\1\cdot q$ is a suffix of $v$ while $\0\cdot q$ is a suffix of $s_ut_u$ since $v^Ru^R$ is $\LESS$-Lyndon. \smallskip

\analyze{Claim 1.} We first show that $t_us_u$ is a $\LESS$-Lyndon word. This is trivial if $u$ is a power of $\0$.
Let $u$ contain $\1$ and let $z$ be the $\LESS$-Lyndon conjugate of $s_ut_u$. 
 Consider the last occurrence of $z$ in $up$. More precisely, let $up=u_1zu_2$ with $|u_2|<|z|$. Let $r$ be the local periodic root of $uv$ at the point $\abs{u_1z}$. Since $z$ is unbordered, we deduce that $|r|>|u_2|$ and therefore $u_2\cdot \1$ is a prefix of $r$. If $|r|\leq|u_1z|$, then $u_2\cdot\1$ is a factor of $u_1z$. This is a contradiction with $z$ being $\LESS$-Lyndon, since $u_2\cdot \0$ is a prefix of $z$. Therefore, $|r|>|u_1z|$ and $up\cdot \1$ is a factor of $rr$. Since $up\cdot\1$ is unbordered by Lemma \ref{lem:5}, we obtain that $|r|\geq |up\cdot\1|$ and $|rr|>|uv|$. Therefore, $uv$ is a factor of $rr$. Since $uv$ is unbordered, we obtain that $|uv|\leq |r|$. The definition of the local period now yields $|r|=|uv|$, and since $(u,v)$ is fully bordered, this implies $u_1z=u$, and $z=t_us_u$.
\smallskip

\analyze{Claim 2.}	We now show that $t_u$ is a prefix of $v$. Suppose, for the sake of contradiction, that $p<_p t_u$. Let $r$ be the local periodic root of $uv$ at the point $(s_ut_u)^{k_u}$.
Since $s_ut_u$ is unbordered, we have $s_up\cdot\1\leq_p r$.
If $|r|<\abs{(s_ut_u)^{k_u}}$, then $p\cdot \1$ is a factor of $(s_ut_u)^{k_u}$, a contradiction with $p\cdot\0$ being a prefix of the $\LESS$-Lyndon word $t_us_u$. If $|r|>\abs{(s_ut_u)^{k_u}}$, then $rr$ contains $up\cdot \1$. Since $up\cdot \1$ is unbordered, we obtain that $|r|\geq |up\cdot \1|$ and $|rr|>|uv|$. As above, this implies $|r|=|uv|$ and the local period of $uv$ at the point $(s_ut_u)^{k_u}$ is $|uv|$, a contradiction.
\smallskip

\analyze{Claim 3.} By symmetry we deduce from previous claims that $t_u$ is a suffix of $v$ and $(s_ut_u)^R$ is the Lyndon conjugate of $(t_us_u)^R$.

\smallskip
\analyze{Claim 4.} We are going to show that the local periodic root of $s_ut_u$ at a point $0 < m < |s_u|$ is short. To be precise, we claim that if $s_u=s_1s_2$, with nonempty $s_1$ and $s_2$, then there is a word $r$ which is both a prefix of $s_2 p$ and a suffix of  $qs_1$.
To prove this, let $r$ be the local periodic root of $s_ut_u$ at the point $|s_1|$.
Suppose that the claim does not hold. Then the local periodic root $r'$ of $uv$ at the point $\abs{us_2^{-1}}$ contains either $p\cdot \1$ or $\1\cdot q$ as a factor. Indeed, if $r'$ is both a prefix of $s_2p$ and a suffix of $qs_1$, then $r'=r$ and the claim holds. Note that $p\cdot \1$ is not a  factor of $qup$, since $p\cdot \0$ is a prefix of the $\LESS$-Lyndon word $t_us_u$. 
Similarly, $\1\cdot q$ is not a  factor of $qup$, since $(\0\cdot q)^R$ is a prefix of the $\LESS$-Lyndon word $(s_ut_u)^R$.
Therefore $\1\cdot qup\cdot \1$ is a factor of $r'r'$, which once more implies $|r'r'|>|uv|$ and $\abs{r'}=\abs{uv}$, a contradiction. The proof that $r$ is a suffix of $qs_1$ is similar, using the point $\abs{s_1}$ of $uv$.

\smallskip

	Consider now two cases corresponding to the two possibilities in the theorem. 
		
		\analyze{Case 1.} Suppose that $(s_ut_u)^{k_u}$ is not a factor of $v$. 
We show that the pair $\left(u',v\right)$	is fully bordered by comparing the local periodic roots of $uv$, $s_ut_u$ and $u'v$ at corresponding points.

The word $uv$ contains exactly two occurrences of  $(s_ut_u)^{k_u}$, namely $0$ and $|s_ut_u|$. This follows from the assumption of Case 1, from the fact that $t_u$ is a prefix of $v$ while $t_us_u$ is not, and from the fact that $s_ut_u$ is unbordered, that is, not overlapping with itself. 
Therefore $(s_ut_u)^{k_u}$ is not a factor of any nontrivial local periodic root of $uv$, since otherwise it would have two non overlapping occurrences.

Consequently, the local periodic root of $uv$ at any point $|ut_u|<m<|uv|$ is also a nontrivial local periodic root of $u'v$ at the point $m-|s_ut_u|$.

For $|u|<m\leq |ut_u|$, observe that the local periodic root $r$ of $uv$ at $m$ is either a factor of $qut_u$, and then $|r|$  is at most the period of $u$, that is $\abs{s_ut_u}$, or $|r|>|qu|$, which implies $\abs{rr}>\abs{uv}$ and $\abs{r}=\abs{uv}$, a contradiction. 

Claim 2, Claim 3 and Claim 4 implies that the local periodic root of $u'v$ at a point $0<m'<|u|-|s_ut_u|$ is the same as the local periodic root of $uv$ at the point $m'+|s_ut_u|$, namely the same as the local periodic root of $s_ut_u$ at the point $m'\mod |s_ut_u|$. 

To complete the proof for this case, it remains to note that $u'v$ and $vu'$ are unbordered: if the shortest border of $u'v$ or $vu'$ is  shorter that $(s_ut_u)^{k_u}$, then it is also a border of $uv$ or $vu$; if it is longer, then $v$ contains $(s_ut_u)^{k_u}$.

	\smallskip
\analyze{Case 2.} Suppose, now, that $v$ contains a factor $(s_ut_u)^{k_u}$. Then it can be written as $v'u'v''$, where $v''=t_uv_2$ for some $v_2$. 
	By $|u|\geq|v|$, we have that $|v'v_2|\leq |s_u|$.  
			Since $s_ut_u$ is unbordered and $t_u$ is a prefix of $v$, it follows that $uv$ contains exactly $2k_u+1$ occurrences of $s_ut_u$, those visible in the factorization 
			$$uv=(s_ut_u)^{k_u+1}v_1(s_ut_u)^{k_u}v_2,$$ 
			where $v'=t_uv_1$, and $v_1$ is a nonempty word since $t_us_u$ is not a prefix of $v$.   
	This implies that $uv'=(s_ut_u)^{k_u+1}v_1$ is the periodic root of $uv'(s_ut_u)^{k_u}$. By Lemma \ref{lem:4}, it is a Lyndon word. 
	Therefore, the local periodic root $r$ of $uv$ at the point $\abs{(s_ut_u)^{k_u+1}v_1}$ has a proper prefix $(s_ut_u)^{k_u}$, otherwise $r$ is a border of the Lyndon word $(s_ut_u)^{k_u+1}v_1$. Since we know all occurrences of $s_ut_u$ in $uv$, we deduce that 
	$r=(s_ut_u)^{k_u}v_1=u'v'$.
	
If $|s_ut_urr|>|uv|$, then the second occurrence of $r$ overlaps with $s_ut_u$. Then $r$ is bordered, a contradiction.
Let $uv=s_ut_urrz$, where $|z|<|s_u|$ for length reasons. Suppose that $z$ is not empty. Since $s_ut_u$ is unbordered, the word $s_ut_u$ is not a prefix of $zs_ut_u$. Therefore we deduce $zs_ut_u\LESS s_ut_u$ from $s_ut_u\less zs_ut_u$, which follows from $uv$ being $\less$-Lyndon. We obtain a contradiction with $vu$ being $\LESS$-Lyndon, since $v's_ut_u$ is a prefix of $v$ and $v'zs_ut_u$ is a factor of $vu$.

Therefore, $z$ is empty, and $uv=s_ut_urr=uv'u'v'$ as claimed.

\smallskip
We show that $\left(u,v'\right)$ is fully bordered. 
We have seen that $uv'$ is a  Lyndon word, in particular unbordered, which implies that $p$ is a proper prefix of $v'$. Symmetrically, we can show that $q$ is a suffix of $v'$. 
Claim 4 now implies that the local periodic root of $uv'$ at a point $0<m<|u|$ is the same as the local periodic root of $uv$ at the point $m$.

Let $|u|< m \leq |up|$. The argument is similar as in Case 1: The local periodic root $r$ of $uv$ at the point $m$ has length at most $|u|$, since $|u|\geq |v|$. Therefore, $r$ has a period $|s_ut_u|$, which implies $|r|\leq |s_ut_u|$ since $r$ is unbordered. Therefore, $r$ is also the local periodic root of $uv'$ at the point $m$. Similarly, we can show that the local periodic root of $uv'$ at a point $|uv'|-m$, where $0 < m \leq |q|$, is the same as the local periodic root of $uv$ at the point $|uv| - m$.

Let $r$ be the local periodic root of $uv$ at a point  $|up|< m < |uv'|- |q|$. If $|r|\leq |uv'u'p|-m$, then $r$ is also a local periodic root of $uv'$ at the point $m$ (see Figure \ref{fig}).
\begin{figure}%
\begin{tikzpicture}
[tecka/.style={circle,inner sep=1.3pt, fill}, scale=0.5
]
\def\da{0} \def\db{2} \def\du{7} \def\dut{8} \def\dm{9} \def\dtuv{11} \def\duv{12} \def\duvu{19} \def\duvub{17}
\def\lev{-3} \def\drr{15} \def\dr{3}
\def\dss{5} \def\dtt{0.5} \def\dvc{2.5} \def\dpp{1}
\def\du{2*\dss+\dtt} \def\duvc{\du +\dvc} \def\duv{\du+2*\dvc+\dss}  \def\duvcuc{\du+\dvc+\dss} \def\duvcu{2*\du+\dvc}
\def\dm{\du+\dpp+0.25} \def\delr{\dvc+\dss -0.8} \def\drrr{\dm  - 6.7} \def\drr{\dm+\delr}
\node[tecka] (a) at (0,0){};
\node[tecka] (b) at (0,\lev){};
\node[tecka] (u) at (\du,0){};
\node[tecka] (ub) at (\du,\lev){};
\node (uvc) at (\duvc,0){$\vert$};
\node (uvcuc) at (\duvcuc,0){$\vert$};
\node[tecka] (uvcb) at (\duvc,\lev){};
\node[tecka] (uv) at (\duv,0){};
\node[tecka] (uvu) at (\duvcu,\lev){};
\draw[thick] (a.center) -- (uv.center);
\draw[thick] (b.center) -- (uvu.center);
\draw[gray,rounded corners=5pt] (\du,0) -- (\du+0.1,0.5) -- 
node[fill=white, inner sep=1pt] {\textcolor{black}{{\scriptsize $p$}}} (\du+\dpp-0.1,0.5) --(\du+\dpp,0);
\draw[gray,rounded corners=5pt] (\duvc-\dpp,0) -- (\duvc-\dpp+0.1,0.5) -- 
node[fill=white, inner sep=1pt] {\textcolor{black}{{\scriptsize $q$}}} (\duvc-0.1,0.5) --(\duvc,0);
\draw[gray,rounded corners=5pt] (\duvcuc,0) -- (\duvcuc+0.1,0.5) -- 
node[fill=white, inner sep=1pt] {\textcolor{black}{{\scriptsize $p$}}} (\duvcuc+\dpp-0.1,0.5) --(\duvcuc+\dpp,0);
\draw[gray,rounded corners=5pt] (\duv-\dpp,0) -- (\duv-\dpp+0.1,0.5) -- 
node[fill=white, inner sep=1pt] {\textcolor{black}{{\scriptsize $q$}}} (\duv-0.1,0.5) --(\duv,0);
\node[circle, fill=gray!30,inner sep=0] at (\dm,1.353+\lev) {{\small $m$}};
\node at (\dm,0.7+\lev) {$\downarrow$};
\node at (\dm,2+\lev) {$\uparrow$};
\draw[gray,rounded corners=5pt] (\drrr,0) -- (\drrr+0.1,-0.6) -- 
node[fill=white, inner sep=1pt] {\textcolor{black}{{\scriptsize $r$}}} (\dm-0.1,-0.6) --(\dm,0);
\draw[gray,rounded corners=5pt] (\drr,0) -- (\drr-0.1,-0.6) -- 
node[fill=white, inner sep=1pt] {\textcolor{black}{{\scriptsize $r$}}} (\dm+0.1,-0.6) --(\dm,0);
\draw[gray,rounded corners=5pt] (0,0) -- (0.1,0.6) -- 
node[fill=white, inner sep=1pt] {\textcolor{black}{{\scriptsize $u$}}} (\du-0.1,0.6) --(\du,0);
\draw[gray,rounded corners=5pt] (\duvc,0) -- (\duvc+0.1,0.6) -- 
node[fill=white, inner sep=1pt] {\textcolor{black}{{\scriptsize $u'$}}} (\duvcuc-0.1,0.6) --(\duvcuc,0);
\draw[gray,rounded corners=5pt] (0,\lev) -- (0.1,\lev-0.6) -- 
node[fill=white, inner sep=1pt] {\textcolor{black}{{\scriptsize $u$}}} (\du-0.1,\lev-0.6) --(\du,\lev);
\draw[gray,rounded corners=5pt] (\duvc,\lev) -- (\duvc+0.1,\lev-0.6) -- 
node[fill=white, inner sep=1pt] {\textcolor{black}{{\scriptsize $u$}}} (\duvcu-0.1,\lev-0.6) --(\duvcu,\lev);
\draw[gray,rounded corners=5pt] (\du,\lev) -- (\du+0.1,\lev-0.6) -- 
node[fill=white, inner sep=1pt] {\textcolor{black}{{\scriptsize $v'$}}} (\duvc-0.1,\lev-0.6) --(\duvc,\lev);
\end{tikzpicture}
\caption{}
\label{fig}%
\end{figure}
Suppose that $|r|$ is greater than that. Then the word $y=\1\cdot q u' p\cdot \1$ is a factor of $r$. We check all the three cyclic occurrences   of $u'$ in $uv$ (namely $0$, $|s_ut_u|$ and $|uv'|$), and verify that there is only one cyclic occurrence of $y$ in $uv$; a contradiction.
\end{proof}

Theorem \ref{thm:1} motivates the following inductive definition. 
Let $\+F$ be the smallest subset of $\{\0,\1\}^*\times \{\0,\1\}^*$ satisfying the following conditions:
\begin{enumerate} 
	\item $(\0,\1)\in\+F$
  \item If $(u,v)\in \+F$, then $(v,u)\in \+F$. \label{def2}
	\item Let $(u,v)\in \+F$. Then $(s_ut_uu,v)\in \+F$. \label{def3}
	\item Let $(u,v)\in\+F$, and let $y$ be a border of $u$ such that $|t_v|<|y|$. Then $(u,vyv)\in \+F$. \label{def4} 
\end{enumerate}

\bigskip
Pairs in $\+F$ have the following properties.
\begin{lemma}\label{lem:if}
Let $(u,v)\in \+F$.
\renewcommand{\theenumi}{\Alph{enumi}}
\begin{enumerate}
	\item Both $uv$ and $vu$ are unbordered. \label{cA}
	\item  $t_u$ is both a prefix and a suffix of $v$. \label{cB}
	\item If $y$ is a border of $u$ such that $|t_v|<|y|$, then $(y,v)\in\+F$. \label{cD}
	\item If $t_u$ is empty, then $s_u\in\{\0,\1\}$; otherwise $(s_u,t_u) \in \+F$. \label{cC}
\end{enumerate}
\end{lemma}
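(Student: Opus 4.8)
The plan is to prove all four statements simultaneously by structural induction on the construction of $\+F$, since the conditions (A)–(D) are clearly intertwined: for instance, (B) is what makes the operations in (3) and (4) well-defined, and (C) ties the structure of $u$ back into $\+F$, which is exactly what one needs to iterate the argument. So I would fix $(u,v)\in\+F$ and consider the four cases according to which clause last produced it, maintaining (A)–(D) as a joint inductive hypothesis for all "smaller" pairs.

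\emph{Base case.} For $(\0,\1)$ one checks everything directly: $\0\1$ and $\1\0$ are unbordered (A); $t_\0$ is empty, hence trivially a prefix and suffix of $\1$ (B); $\0$ has no border longer than $t_\1$ (which is empty), so (D) is vacuous; and $t_\0$ empty with $s_\0=\0\in\{\0,\1\}$ gives (C). The symmetry clause (2) is immediate because (A)–(D) are symmetric under the relevant exchange — (A) and (C) are manifestly symmetric in the sense needed, (B) becomes the statement about $t_v$ in $u$ which is what (D)'s hypothesis $|t_v|<|y|$ refers to, so one should be a little careful here and perhaps restate the inductive hypothesis as "(A)–(D) together with their $u\leftrightarrow v$ mirrors" to avoid friction.

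\emph{Clause (3): $(s_ut_uu,v)$ from $(u,v)$.} Write $\hat u=s_ut_uu$. The periodic root of $u$ is $s_ut_u$, so $\hat u=(s_ut_u)^{k_u+1}s_u$ has the same periodic root, giving $s_{\hat u}=s_u$, $t_{\hat u}=t_u$, $k_{\hat u}=k_u+1$. Then (B) for $\hat u$ is literally (B) for $u$. For (A), I would argue that $\hat uv$ and $v\hat u$ are unbordered: a short border would already be a border of $uv$ or $vu$ (using that $t_u\pref v$ controls overlaps at the seam), and a long border would force $(s_ut_u)^{k_u+1}$, hence a long power of $s_ut_u$, to sit inside $v$, contradicting $|v|\le|u|<|\hat u|$ and the control on occurrences of $s_ut_u$ — this is the same packing-of-occurrences bookkeeping used repeatedly in Theorem~\ref{thm:1}, and I'd expect to reuse it verbatim. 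Statement (C) for $\hat u$ is (C) for $u$ since $(s_{\hat u},t_{\hat u})=(s_u,t_u)$. Statement (D) for $\hat u$: a border $y$ of $\hat u$ with $|t_v|<|y|$; the borders of $\hat u$ are, besides $u$ itself and shorter ones, exactly $\{(s_ut_u)^j s_u : 0\le j\le k_u\}$ together with borders of $s_u$, and one must show each such $y$ with $|y|>|t_v|$ lies in $\+F$ — the ones of the form $(s_ut_u)^j s_u$ are obtained from $(s_u,t_u)$ (which is in $\+F$ by (C), when $t_u\neq\varepsilon$) by iterating clause (3), and the shorter ones come from (D) applied to $(s_u,t_u)$; the case $t_u=\varepsilon$ needs a separate, easy look since then $s_u\in\{\0,\1\}$ has no long borders.

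\emph{Clause (4): $(u,vyv)$ from $(u,v)$ and a border $y$ of $u$ with $|t_v|<|y|$.} Set $\bar v=vyv$. By (D), $(y,v)\in\+F$, and by the inductive hypothesis applied to $(y,v)$ we get $t_y\pref v$ and $t_y\suff v$; also $y$ a border of $u$ means $u=yu''y$-type structure, which with $|t_v|<|y|$ lets us identify $t_{\bar v}$. The crux is computing the periodic root of $\bar v=vyv$: I expect $s_{\bar v}=v$-related data, more precisely that $\bar v$ has $vy$ (or $v\cdot(\text{appropriate piece})$) as periodic root so that $t_{\bar v}=t_u$, which is what (B) for $(u,\bar v)$ demands; this identification is where the hypothesis $|t_v|<|y|$ does real work. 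For (A), $u\bar v=uvyv$ and $\bar vu=vyvu$ unbordered: a border not longer than $|v|$ reduces to a border of $uv$; a longer one has to engage the internal copy of $y$, and since $y\pref u$ one gets an occurrence of $u$ (or a long power) forced into a place it cannot be, again by the occurrence-counting argument of Theorem~\ref{thm:1} Case~2 — indeed clause (4) is precisely the inverse of Case~2, so the contradiction is the mirror of the one there. Statements (C) and (D) for $(u,\bar v)$ concern only $u$, whose decomposition $s_u,t_u,k_u$ is unchanged, so they follow from the hypothesis for $(u,v)$ once (B) — i.e. $t_{\bar v}=t_u$ — is in hand (note (D)'s hypothesis now reads $|t_{\bar v}|<|y'|$, the same as $|t_v|<|y'|$).

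\emph{Main obstacle.} The genuinely delicate point is the periodic-root computation in clause (4): showing $\bar v=vyv$ has exactly the periodic root one wants, with no shorter period sneaking in, so that $t_{\bar v}=t_u$ and (B) survives. This is exactly where one must exploit $|t_v|<|y|$, together with $t_y\pref v$, $t_y\suff v$ from the inductive hypothesis on $(y,v)$, and the unborderedness of the relevant roots; a false period of $\bar v$ would produce an unexpected short border of $u$ or an extra occurrence of $s_ut_u$, contradicting either minimality of $|y|$ among long borders or the occurrence structure established in Theorem~\ref{thm:1}. The (A)-type verifications in clauses (3) and (4) are laborious but entirely parallel to the local-period computations already carried out in the proof of Theorem~\ref{thm:1}, so I would present them briefly and refer back.
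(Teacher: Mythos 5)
Your overall strategy matches the paper's exactly: structural induction on the construction of $\+F$, with the inductive hypothesis strengthened to include the $u\leftrightarrow v$ mirrors (the paper phrases this as proving (A)--(D) simultaneously for all four pairs $(s_ut_uu,v)$, $(v,s_ut_uu)$, $(u,vyv)$, $(vyv,u)$). However, there is a concrete and consequential error in your clause~(4) analysis, together with a couple of smaller slips.

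The main error is the identification $t_{\bar v}=t_u$ for $\bar v=vyv$. It is false, and (B) for $(u,\bar v)$ does not ask for it: (B) only requires $t_u\leq_p \bar v$ and $t_u\leq_s \bar v$, which is immediate since $t_u$ is a prefix and suffix of $v$ and $v$ is a prefix and suffix of $vyv$. What one actually needs (for the mirror of (B), and for (C), (D)) is the periodic decomposition of $\bar v$, and it is $s_{\bar v}=v$, $t_{\bar v}=y$. Far from being delicate, this is a one-liner in the paper: by (D) for $(u,v)$, $(y,v)\in\+F$; by (A) for $(y,v)$, $vy$ is unbordered; hence $|vy|$ is the least period of $vyv$, and $s_{\bar v}=v$, $t_{\bar v}=y$. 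Your misidentification then propagates into (D) for $(u,\bar v)$: you claim the hypothesis ``$|t_{\bar v}|<|y'|$'' is ``the same as $|t_v|<|y'|$'', but since $t_{\bar v}=y$ and $|t_v|<|y|$, the threshold genuinely goes up to $|y|$, and (D) does not follow ``unchanged''. One has to observe that for a border $y'$ of $u$ with $|y|<|y'|$, the word $y$ is itself a border of $y'$ with $|t_v|<|y|$, so that $(y',vyv)\in\+F$ follows from $(y',v)\in\+F$ by a fresh application of clause~(4) of the definition.

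Two further slips. In clause (3) you invoke $|v|\le|u|$ to bound a putative long border, but no such inequality is available in this lemma (it appears only as a normalization in Theorem~\ref{thm:1}); the bound must instead come from the fact that a border $r$ of $\hat u v$ with $|r|>|s_ut_u|$ but $|r|\le|\hat u|$ would inherit the period $|s_ut_u|<|r|$ and hence be bordered, while a border longer than $|\hat u|$ immediately yields a border $uv'$ of $uv$. And in your (D) argument for clause (3), you describe the borders of $\hat u$ and then place them in $\+F$ by iterating clause (3) starting from $(s_u,t_u)$ — but this constructs pairs of the form $((s_ut_u)^js_u,\,t_u)$, not the pairs $(y,v)$ that (D) demands. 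The paper's observation is cleaner and directly to the point: the only border of $s_ut_uu$ that is not already a border of $u$ is $u$ itself (a longer one would force a period below $|s_ut_u|$), and then (D) for $(u,v)$ gives $(y,v)\in\+F$ outright, with the case $y=u$ trivial.
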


\begin{proof}
We proceed by induction on the structure of $\+F$.  
All claims are true for $(\0,\1)$ and $(\1,\0)$. 

Suppose now that $(u,v), (v,u)\in\+F$. 
We shall prove the theorem for the four pairs $(s_ut_uu,v)$, $(v,s_ut_uu)$, $(u,vyv)$ and $(vyv,u)$ where $y$ is a border of $u$ such that $|t_v|<|y|$. Observe that $s_{s_ut_uu}=s_u$ and $t_{s_ut_uu}=t_u$. By \eqref{cD} for $(u,v)$ and \eqref{cA} for $(y,v)$ we have that $vy$ is unbordered which implies that $|vy|$ is the least period of $vyv$. Therefore $s_{vyv}=v$ and $t_{vyv}=y$.

\eqref{cA} For all four pairs, the claim follows easily from $uv$ and $vu$ being unbordered. Consider for example $vyvu$ and suppose it has the shortest border $r$. If $|r|\leq |vy|$, then $r$ is also a border of $vu$. If $|vy|<r\leq |vyv|$, then $r$ is not unbordered. If $r=vyvu'$ with $u'\leq_p u$, then $vu'$ is a border of $vu$.

\eqref{cB} For all four pairs, the claim follows directly from \eqref{cB} for $(u,v)$ and $(v,u)$.

\eqref{cD} For $(s_ut_uu,v)$, note that the only border of $s_ut_uu$ which is not also a border of $u$ is $u$ itself (a border longer than $u$ would imply a period smaller than $|s_ut_u|$). The claim now follows from \eqref{cD} for $(u,v)$.

For $(v,s_ut_uu)$, let $y'$ be a border of $v$ such that $|t_u|<|y'|$. From \eqref{cD} for $(v,u)$, we obtain $(y',u)\in\+F$. Therefore also $(y',s_ut_uu)\in\+F$ by the definition of $\+F$.

For $(u,vyv)$, let $y'$ be a border of $u$ such that $|y|<|y'|$. Then $(y',v)\in \+F$ follows from \eqref{cD} for $(u,v)$. Since $y$ is a border of $y'$, and $|t_v|<|y|$, we have $(y',vyv)\in\+F$ from the definition of $\+F$.

For $(vyv,u)$, let $y'$ be a border of $vyv$ such that $\abs{t_u}<\abs{y'}$. From \eqref{cD} for $(u,v)$, we deduce that $vy$ is unbordered. This implies that $|y'|\leq |v|$. If $y'=v$, then the claim hold since $(v,u)\in \+F$. If  $|y'|<|v|$, then $y'$ is a border of $v$ such that $|t_u|<|y'|$ and the claim follows from \eqref{cD} for $(v,u)$.

\eqref{cC} Straightforward for pairs $(s_ut_uu,v)$, $(v,s_ut_uu)$ and $(u,vyv)$. For $(vyv,u)$, we have that $t_{vyv}=y$ is not empty, and we want to show that $(v,y)\in \+F$, since $(v,y)=(s_{vyv},t_{vyv})$. This follows from \eqref{cD} for $(u,v)$ and from item \eqref{def2} of the definition of $\+F$.
\end{proof} 

We are now ready for the main result.

\begin{theorem}\label{thm:main}
    The pair $(u,v)\in \{\0,\1\}^*\times \{\0,\1\}^*$ is fully bordered if and only if $(u,v)\in\+F$.
\end{theorem}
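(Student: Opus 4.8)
The plan is to prove the two directions separately. The "if" direction ($(u,v)\in\+F \Rightarrow (u,v)$ fully bordered) should go by induction on the structure of $\+F$, using Lemma \ref{lem:if} as the workhorse of hypotheses one is allowed to carry along. The base pair $(\0,\1)$ (and its conjugate $(\1,\0)$) is fully bordered since $\01$ has exactly the two unbordered conjugates $\01$ and $\10$. Closure under item \eqref{def2} is immediate from Lemma \ref{lem1}\eqref{ch0}. For the constructor \eqref{def3}, I would take $(u,v)\in\+F$ fully bordered and show $(s_ut_uu,v)$ is fully bordered: set $w=s_ut_uu$, so $s_w=s_u$, $t_w=t_u$, $k_w=k_u+1$; by Lemma \ref{lem:if}\eqref{cA} the conjugates $wv$ and $vw$ are unbordered, and one must check no \emph{other} conjugate is unbordered. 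Concretely, a nontrivial local period of $wv$ at a point $m$ either sits inside a single copy of the periodic block (where it is inherited from the corresponding point of the fully bordered $uv$ or from $s_ut_u$, hence $<|uv|\le|wv|$) or it spans past the extra block $s_ut_u$, in which case — because $s_ut_u$ is unbordered and thus non-self-overlapping — one gets a local periodic root longer than $|wv|$, i.e. a trivial local period, which is what we want to exclude from all but two points. This is essentially the computation done inside Case 1 of Theorem \ref{thm:1}, read in reverse. For constructor \eqref{def4}, $(u,vyv)$ with $y$ a border of $u$ and $|t_v|<|y|$: here $s_{vyv}=v$, $t_{vyv}=y$, and one argues that inserting the palindromic-style block $vyv$ creates a new obligatory unbordered-conjugate pair only at the point $|u|$, reusing Case 2 of Theorem \ref{thm:1} in reverse; the condition $|t_v|<|y|$ is exactly what makes $vy$ unbordered (via Lemma \ref{lem:if}\eqref{cD} for $(y,v)$ and \eqref{cA}), hence what prevents spurious short borders.

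For the "only if" direction ($(u,v)$ fully bordered $\Rightarrow (u,v)\in\+F$) the plan is induction on $|uv|$. The base case $|uv|=2$ forces $uv\in\{\01,\10\}$, so $(u,v)\in\{(\0,\1),(\1,\0)\}\subseteq\+F$. For $|uv|>2$, by Lemma \ref{lem1}\eqref{ch0} we may assume $|v|\le|u|$ and apply Theorem \ref{thm:1}. In the first alternative, $(u',v)$ with $u'=(s_ut_u)^{-1}u$ is a fully bordered pair of strictly smaller total length, so by induction $(u',v)\in\+F$; since $s_{u'}=s_u$ and $t_{u'}=t_u$ by Lemma \ref{lem1}\eqref{ch3} (or directly), we get $u=s_{u'}t_{u'}u'=s_ut_uu'$, so $(u,v)\in\+F$ by constructor \eqref{def3}. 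In the second alternative, $v=v'u'v'$ and $(u,v')$ is fully bordered of smaller total length, hence in $\+F$ by induction; to conclude via constructor \eqref{def4} I need $u'$ to be a border of $u$ with $|t_{v'}|<|u'|$. That $u'$ is a border of $u$: $u'=(s_ut_u)^{-1}u$ is a suffix of $u$, and from the factorization $uv=(s_ut_u)^{k_u+1}v_1(s_ut_u)^{k_u}v_2$ appearing in Case 2 one reads off that $u'=(s_ut_u)^{k_u}\cdot(\text{prefix})$ is also a prefix of $u$ — this needs to be extracted carefully from the proof of Theorem \ref{thm:1}. The length condition $|t_{v'}|<|u'|$ should follow because $t_{v'}$ is a proper prefix/suffix of the relevant block and $u'$ contains at least one full copy of $s_ut_u$ (recall $k_u\ge1$ in the relevant situation).

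The main obstacle I anticipate is the bookkeeping in the second alternative of the "only if" direction: matching the object $u'$ produced by Theorem \ref{thm:1} with the border $y$ demanded by constructor \eqref{def4}, and verifying the inequality $|t_{v'}|<|u'|$ precisely. One subtlety is that $(u,v')$ fully bordered tells us about $t_{v'}$ via Lemma \ref{lem:if}\eqref{cB} only \emph{after} we know $(u,v')\in\+F$, so the inequality must instead be derived from the structural facts in Theorem \ref{thm:1}'s Case 2 (the factorization $uv=uv'u'v'$ together with $t_u$ being a prefix of $v$). A secondary obstacle is a clean treatment of degenerate cases in the "if" direction when $t_u$ is empty (i.e. $u$ is a power of a single letter): here Lemma \ref{lem:if}\eqref{cC} gives $s_u\in\{\0,\1\}$, and one should check the local-period arguments still go through when the "block" $s_ut_u$ is a single letter. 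I expect both obstacles to be resolved by careful but routine case analysis, leaning on the facts that $s_ut_u$ is unbordered (Lemma \ref{lem1}\eqref{chlyn}) and on the characterization of fully bordered pairs as "trivial local period exactly at $0$ and $|u|$" stated right after Definition of fully bordered.
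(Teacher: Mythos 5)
Your high-level plan matches the paper's architecture (both directions by induction, Theorem~\ref{thm:1} driving the ``only if'' side, Lemma~\ref{lem:if} driving the ``if'' side), but there are two genuine problems.

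The serious gap is in Alternative~1 of the ``only if'' direction. You write that $s_{u'}=s_u$ and $t_{u'}=t_u$ ``by Lemma \ref{lem1}\eqref{ch3} (or directly),'' and then invoke constructor~\eqref{def3}. But Lemma~\ref{lem1}\eqref{ch3} requires $k_u>1$. When $k_u=1$ we have $u'=s_u$, and $s_{u'},t_{u'}$ are the periodic-root data \emph{of $s_u$}, which in general have nothing to do with $s_u$ and $t_u$ (for instance if $s_u$ is a single letter then $t_{u'}$ is empty while $t_u$ need not be). In that case $s_{u'}t_{u'}u'\neq u$ and constructor~\eqref{def3} does not reconstruct $(u,v)$. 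The paper handles $k_u=1$ by a separate argument: it shows (via Lemma~\ref{lem:if}\eqref{cB}, Claim~2 of Theorem~\ref{thm:1}, the Lyndon properties from Claim~1 and Lemma~\ref{lem1}\eqref{chlyn}, and Lemma~\ref{lem:onlyone}) that $t_{u'}$ is a \emph{proper} prefix of $t_u$, and then applies constructor~\eqref{def4} to the pair $(v,u')$ with $y=t_u$ to get $(v,u)\in\+F$. This is not a routine case split; it is where Lemma~\ref{lem:onlyone} and the ``both orders Lyndon'' machinery earn their keep, and your proposal does not see it. (Incidentally, in Alternative~2 your worry about needing Lemma~\ref{lem:if}\eqref{cB} ``before knowing $(u,v')\in\+F$'' is unfounded: $(u,v')$ has smaller total length and is fully bordered, so the induction hypothesis already gives $(u,v')\in\+F$ and Lemma~\ref{lem:if}\eqref{cB} applies directly; the paper then gets $|t_{v'}|<|v'|<|s_u|\le|u'|$ by a plain length chase, not by $u'$ containing a copy of $s_ut_u$, which is false when $k_u=1$.)

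The ``if'' direction is also structurally different from, and weaker than, the paper's. You propose an induction on which constructor was last applied, checking fully-borderedness of $(s_ut_uu,v)$ and $(u,vyv)$ one at a time. The paper instead does induction on $|uv|$ with a \emph{uniform} argument: by Lemma~\ref{lem:if}\eqref{cC} the pair $(s_u,t_u)$ (shorter, in $\+F$) is fully bordered, so local periodic roots of $s_ut_u$ are short; Lemma~\ref{lem:onlyone} keeps $t_u$ out of these roots; and Lemma~\ref{lem:if}\eqref{cB} ($t_u$ is a prefix and suffix of $v$) transfers them to $uv$ at every interior point of $u$. Then the same for $v$, and Lemma~\ref{lem:if}\eqref{cA} closes it. Your sketch for constructor~\eqref{def3} ends with the statement that a local period ``spanning past the extra block'' is trivial ``which is what we want to exclude from all but two points'' — but that is precisely the claim that needs proof, and nothing in the sketch rules it out. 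A constructor-by-constructor argument is not obviously doomed, but as written it is a statement of intent rather than a proof, and it is harder than the route the paper takes.
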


\begin{proof}
The ``only if'' part of Theorem \ref{thm:main} follows from Theorem \ref{thm:1} by induction on the length of the pair (that is, on $|uv|$) as follows. Both fully bordered pairs $(u,v)$ with $\abs{uv}\leq 2$ are in $\+F$. Consider a fully bordered pair $(u,v)$ of length $|uv|>2$ and let all fully bordered pairs of length less than $|uv|$ be in $\+F$. By symmetry expressed in Lemma \ref{lem1}\eqref{ch0}, we can suppose that $|v|\leq |u|$. 

Suppose first that $\left(u',v\right)$ is fully bordered, where $u'=(s_ut_u)^{-1}u$. Then $(u',v), (v,u')\in \+F$. If $k_u>1$, then $s_{u'}=s_u$ and $t_{u'}=t_u$ by Lemma \ref{lem1}\eqref{ch3}, which implies $(u,v)\in \+F$ by item \eqref{def3} of the definition of $\+F$.
If $k_u=1$, then $u'=s_u$. Lemma \ref{lem:if}\eqref{cB} implies that $t_{u'}$ is a prefix of $v$. Also $t_u$ is a prefix of $v$ by Claim 2 in the proof of Theorem \ref{thm:1}. Therefore, $t_u$ and $t_{u'}$ are prefix comparable. The word $s_ut_u$ is Lyndon by Lemma \ref{lem1}\eqref{chlyn}, and $t_us_u$ is Lyndon by Claim 1 in the proof of Theorem \ref{thm:1}. Since $t_{u'}$ is a factor of $s_u$, Lemma \ref{lem:onlyone} now implies that $t_u$ is not a prefix of $t_{u'}$. Therefore, $t_{u'}$ is a proper prefix of $t_u$. Item \eqref{def4} of the definition of $\+F$ applied to the pair $(v,u')$ with $y=t_u$ yields $(v,u't_uu')=(v,u)\in\+F$. Therefore also $(u,v)\in\+F$.

Suppose now that $v=v'(s_ut_u)^{-1}uv'$ and $(u,v')$ is fully bordered. Then $(u,v')\in\+F$.
From Lemma \ref{lem:if}\eqref{cB} we have $|t_u|<|v'|$, which together with $|v|\leq|u|$ implies $|v'|<|s_u|$. Therefore also $|t_{v'}|<|y|$, where $y=(s_ut_u)^{-1}u$. Item \eqref{def4} of the definition of $\+F$ again implies $(u,v'yv')=(u,v)\in\+F$.

\medskip
The pairs $(\0,\1)$ and $(\1,\0)$ are fully bordered. To prove the ``if'' part by induction, let $(u,v)\in \+F$, and let all pairs in $\+F$ shorter than $(u,v)$ be fully bordered.

 Consider the local periodic root $r$ of $s_ut_u$ at a point $0< m<|s_u|$. If $t_u$ is empty, then $s_u\in\{\0,\1\}$ by Lemma \ref{lem:if}\eqref{cC} and $r=s_u$. If $t_u$ is nonempty, then Lemma \ref{lem:if}\eqref{cC} and the induction assumption imply that $(s_u,t_u)$ is fully bordered. Therefore $|r|<|s_ut_u|$, and Lemma \ref{lem:onlyone} implies that $t_u$ is not a factor of $r$. This yields a factorization $t_us_ut_u=u_1rru_2$ where $|u_1r|=|t_u|+m$. Since $t_u$ is both a prefix and a suffix of $v$ by Lemma \ref{lem:if}\eqref{cB}, we deduce that $r$ is the local periodic root of $uv$ at any point $0<m'<|u|$ such that $m=m'\mod |s_ut_u|$. We also easily observe that the local periodic root of $uv$ at a point $|s_u|\leq m\leq |u| - |s_u|$ is the same as the local periodic root of $|s_ut_u|$ at the point $m \mod |s_ut_u|$. We have shown that local periodic roots of $uv$ at all points $0<m<|u|$ are short.

Repeating the same argument for the word $v$ completes the proof that $(u,v)$ is a fully bordered pair. 
\end{proof}

The definition of $\+F$ and Lemma \ref{lem:if} yield a good insight into the structure of fully bordered words. One of straightforward corollaries is the answer to the question that served as a motivation for our research.

\begin{theorem}
Let $(u,v)$ be a fully bordered pair. Then both $u$ and $v$ are palindromes.
\end{theorem}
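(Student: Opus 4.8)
The plan is to reduce, via Theorem~\ref{thm:main}, to proving that every $(u,v)\in\+F$ has $u$ and $v$ palindromic, and to establish this by induction on the inductive structure of $\+F$. The pairs $(\0,\1)$ and $(\1,\0)$ are clearly fine, and rule~\eqref{def2} is harmless because the claim is symmetric in the two components. So everything reduces to rules~\eqref{def3} and~\eqref{def4}.

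The one fact I would isolate is the trivial observation $(\star)$: \emph{if $w=w^R$ and $y$ is both a prefix and a suffix of $w$, then $y=y^R$} --- indeed $y$ is the prefix of $w$ of length $|y|$, hence, since $w=w^R$, the reversal of the suffix of $w$ of length $|y|$, and that suffix is $y$ itself. From $(\star)$ I would deduce what rule~\eqref{def3} needs: \emph{if $w$ is a nonempty palindrome then $s_w$ and $t_w$ are palindromes, and consequently $s_wt_ww=(s_wt_w)^{k_w+1}s_w$ is a palindrome.} Since $w=(s_wt_w)^{k_w}s_w$ starts and ends with $s_w$, applying $(\star)$ with $y=s_w$ gives $s_w=s_w^R$; then writing $w=s_w(t_ws_w)^{k_w}$ and $w^R=s_w^R(t_w^Rs_w^R)^{k_w}=s_w(t_w^Rs_w)^{k_w}$, cancelling the common prefix $s_w$ and comparing prefixes of length $|t_ws_w|$ gives $t_w=t_w^R$; finally $s_wt_ww=s_w(t_ws_w)^{k_w+1}$ is then visibly fixed by reversal. (The degenerate case $t_w$ empty, i.e.\ $k_w=0$, is immediate.)

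For the inductive step: if $(u,v)$ is obtained from $(u_0,v)\in\+F$ by rule~\eqref{def3}, so that $u=s_{u_0}t_{u_0}u_0$, then $u_0$ is a palindrome by the induction hypothesis, hence so is $u$ by the previous paragraph, and $v$ is a palindrome by the induction hypothesis. If $(u,v)$ is obtained from $(u,v_0)\in\+F$ by rule~\eqref{def4}, so that $v=v_0yv_0$ with $y$ a border of $u$, then $u$ and $v_0$ are palindromes by the induction hypothesis; $y$, being both a prefix and a suffix of the palindrome $u$, is a palindrome by $(\star)$, and therefore $v=v_0yv_0$ is a palindrome. This closes the induction.

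The only piece with any content is the statement about $s_w$ and $t_w$, and even there the work is routine; the minor care needed is just checking the degenerate cases and the legitimacy of the cancellations, and I foresee no genuine obstacle. I note that this route uses neither the side condition $|t_v|<|y|$ in rule~\eqref{def4}, nor Lemma~\ref{lem:if}, nor the reduction machinery of Theorem~\ref{thm:1} --- only Theorem~\ref{thm:main} together with the elementary fact $(\star)$.
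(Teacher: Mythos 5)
Your proposal is correct and follows essentially the same route as the paper: the paper's proof is a one-line reduction to the observation that $s_u$, $t_u$, and any border of $u$ are palindromes whenever $u$ is, followed by induction on the definition of $\+F$, which is precisely what your fact $(\star)$ and the ensuing case analysis spell out in full.
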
 
\begin{proof}
Follows directly by induction from the definition of $\+F$. It is enough to observe that $s_u$, $t_u$, and any border of $u$ are palindromes if $u$ is a palindrome.
\end{proof}

\section*{Acknowledgements}
We want to thank Luca Zamboni for introducing us to this problem.
Furthermore, the second author expresses his gratitude to Tero Harju and Luca Zamboni for numerous interesting discussions on the subject.
We also thank Jana Hadravov\'a for many suggestions improving the manuscript.

\end{document}